\theoremstyle{plain}
\newtheorem{lem}{Lemma}
\newtheorem{prop}{Proposition}
\newtheorem{assum}{Assumption}
\theoremstyle{definition}
\newtheorem{defi}{Definition}
\newtheorem{prob}{Problem}
\theoremstyle{remark}
\newtheorem{rem}{Remark}
\def\Rset{\mathbb{R}}
\def\Nset{\mathbb{N}}
\newcommand{\CC}{{\mathcal C}}
\newcommand{\EE}{{\mathcal E}}
\newcommand{\FF}{{\mathcal F}}
\newcommand{\GG}{{\mathcal G}}
\newcommand{\HH}{{\mathcal H}}
\newcommand{\MM}{{\mathcal M}}
\newcommand{\NN}{{\mathcal N}}
\newcommand{\PP}{{\mathcal P}}
\newcommand{\SSS}{{\mathcal S}}
\newcommand{\T}{{\mathcal T}}
\newcommand{\mbf}[1]{\mathbf{#1}}
\newcommand{\px}{{x^+}}
\newcommand{\subss}[2]{#1_{[#2]}}
\newcommand{\tx}{{\tilde x}}
\newcommand{\diag}{\mbox{diag}}
\newcommand{\Xset}{\mathbb{X}}
\newcommand{\Wset}{\mathbb{{{W}}}}
\newcommand{\abs}[1]{{|{#1}|}}
\newcommand{\norme}[2]{{||{#1}||_{#2}}}
\newcommand{\One}{\textbf{1}}
\newcommand{\Zero}{\textbf{0}}
\newcommand{\tpx}{{\tilde{x}}^+}
\newcommand{\pe}{{e}^+}
\newcommand{\Eset}{\mathbb{E}}
\newcommand{\Sset}{\mathbb{S}}
\newcommand{\dist}[2]{\mbox{dist}({#1},{#2})}
\newcommand{\imply}{\Rightarrow}
\newcommand{\ba}[1]{\begin{array}{#1}}
\newcommand{\ea}{\end{array}}
\newcommand{\matr}[1]{
\begin{bmatrix}
    #1
\end{bmatrix}
}
\begin{document}

          \title{Distributed bounded-error state estimation for partitioned systems based on practical robust positive invariance\thanks{The research leading to these results has received funding from the European Union Seventh Framework Programme [FP7/2007-2013]  under grant agreement n$^\circ$ 257462 HYCON2 Network of excellence.}}

          \author{Stefano Riverso%
            \thanks{Electronic address: \texttt{stefano.riverso@unipv.it}; Corresponding author}} 

          \author{Daria Rubini%
            \thanks{Electronic address: \texttt{daria.rubini01@universitadipavia.it}}}

          \author{Giancarlo Ferrari-Trecate%
            \thanks{Electronic address: \texttt{giancarlo.ferrari@unipv.it}\\S. Riverso, D. Rubini and G. Ferrari-Trecate are with Dipartimento di Ingegneria Industriale e dell'Informazione, Universit\`a degli Studi di Pavia, via Ferrata 1, 27100 Pavia, Italy}}
          
          \affil{Dipartimento di Ingegneria Industriale e dell'Informazione\\Universit\`a degli Studi di Pavia}
          \date{\textbf{Technical Report}\\ November, 2013}

          \maketitle

          \begin{abstract}
            We propose a partition-based state estimator for linear discrete-time systems composed by coupled subsystems affected by bounded disturbances. The architecture is distributed in the sense that each subsystem is equipped with a local state estimator that exploits suitable pieces of information from parent subsystems. Moreover, differently from methods based on moving horizon estimation, our approach does not require the on-line solution to optimization problems. Our state-estimation scheme, that is based on the notion of practical robust positive invariance developed in \cite{Rakovic2011a}, also guarantees satisfaction of constraints on local estimation errors and it can be updated with a limited computational effort  when subsystems are added or removed.\\
            \emph{Keyword}: distributed state estimation, robust positive invariant sets, bounded error estimation.
          \end{abstract}

          \newpage

          \section{Introduction}
               In modern engineering there are several examples of applications composed by a large number of subsystems and for which centralized operations can be very expensive. For instance, the use of centralized controllers and state estimators can be hampered by the complexity of the design stage or by demanding computational and communication requirements for on-line operations. An alternative approach is to decompose the plant into physically coupled subsystems and design local controllers and state estimators associated to each subsystem. In these cases, local devices can operate in parallel using computational resources available at subsystem locations. Approaches with these features have been studied since the 1970's under the banner of decentralized and distributed control. 

               Available distributed state estimation schemes can be classified according to different criteria. First, the goal of a local state estimator can be either to reconstruct the state of the overall plant \cite{Alriksson2006,Farina2010a,Kamgarpour2008,Carli2008,Subbotin2009,Battistelli2011} or a subset of it \cite{Mutambara1998,Vadigepalli2003,Khan2008,Stankovic2009,Farina2010,Farina2011b,Haber2013}. In particular, estimators are termed partition-based if subsystems have non-overlapping states and a local estimator reconstructs the state of the corresponding subsystem only. Second, the topology of the communication network connecting local estimators can be different, ranging from all-to-all communication \cite{Vadigepalli2003} to transmission of information only from each subsystem to its children, i.e. subsystems influenced by it \cite{Khan2008,Stankovic2009,Farina2010,Farina2011b}. Third, local estimators can be based on unconstrained models \cite{Mutambara1998,Vadigepalli2003,Khan2008,Stankovic2009} or can cope with constraints on system variables such as disturbances, states \cite{Farina2010} and estimation errors \cite{Farina2011b}.

               In this paper we propose a novel partition-based state estimator for linear discrete-time subsystems affected by bounded disturbances. Similarly to the method proposed in \cite{Farina2010} and \cite{Farina2011b}, our scheme is distributed in the sense that computation of local state estimates can be performed in parallel but only after each estimator has received suitable pieces of information from parent subsystems. Moreover, as in \cite{Farina2011b}, state estimators account for constraints on subsystem disturbances and guarantee the fulfillment of \emph{a priori} specified constraints on local estimation errors. Differently from the scheme in \cite{Farina2010}, that is based on moving horizon estimation, and similarly to \cite{Farina2011b}, local estimators have a Luenberger structure and therefore do not require the on-line solution to optimization problems. Furthermore, most operations needed for the design of a local estimator can be performed using computational resources collocated with the corresponding subsystem and the only centralized step requires the analysis of a system whose order is equal to the number of subsystems. 

               In order to guarantee convergence of state estimates in absence of disturbances and fulfillment of prescribed constraints on the estimation error, we rely on the notion of practical robust positive invariance developed in \cite{Rakovic2011a} that is applied to the error dynamics. We also highlight that most of the appealing computational features of our method directly follow from results reported in \cite{Rakovic2011a} for the case of polytopic constraints. 
Since practical robust positive invariance implies worst-case robustness against the propagation of errors between subsystems, our design method involves some degree of conservatism. In the attempt of maximizing chances of successful design, we provide guidelines on the choice of local estimator parameters.  
We also show that when subsystems are added or removed, the state estimation scheme can be updated with limited efforts. 
More in detail, we prove that, in order to preserve convergence and fulfillment of constraints on estimation errors, (i) the plugging in of a subsystem requires the decentralized design of local estimators for the subsystem and its children only, besides the re-execution of the centralized step; (ii) the unplugging of a subsystem does not require any update.
               Compared to the distributed state estimator proposed in \cite{Farina2011b}, our scheme has several distinctive features. First, the use of the notion of practical robust positive invariance instead of the more standard concept of robust positive invariance, allows us to achieve a less conservative design procedure (see \cite{Rakovic2010a} for a discussion on the degree of conservativeness of various invariance concepts). Second, our local estimators can take advantage of the knowledge of parents' outputs and this can be fundamental for successful estimator design, and demonstrated in Section \ref{sec:example} through an example. Third, the method in \cite{Farina2011b} requires to analyze in a centralized fashion the stability of a system whose order is equal to the sum of the orders of all subsystems.

               The paper is structured as follows. Local state estimators are described in Section \ref{sec:distrstateesit}. In Section \ref{sec:pracRPI} we introduce practical robust decentralized invariance and show how it can be applied for guaranteeing convergence of estimators and constraint satisfaction. In Section \ref{sec:computational} we detail the design of local estimators. Section~\ref{sec:pnp} describes how to retune the estimator when subsystems are added or removed from the network.
In Section \ref{sec:example} we illustrate the use of the distributed state estimator for reconstructing the states of a power network system and compare our method with the state estimation scheme in \cite{Farina2011b}. Section \ref{sec:conclusion} is devoted to conclusions.

               \textbf{Notation.} We use $a:b$ for the set of integers $\{a,a+1,\ldots,b\}$. The symbol $\Rset_+^n$ stands for the vectors in $\Rset^n$ with nonnegative elements. The column vector with $s$ components $v_1,\dots,v_s$ is $\mbf v=(v_1,\dots,v_s)$. The symbol $\oplus$ denotes the Minkowski sum, i.e. $A=B\oplus C$ if and only if $A=\{a:a=b+c,~b\in B, ~c\in C\}$. Moreover, $\bigoplus_{i=1}^sG_i=G_1\oplus\ldots\oplus G_s$. The symbol $\One_\alpha$ (resp. $\Zero_\alpha$) denotes a matrix or a column vector with all $\alpha$ elements equal to $1$ (resp. $0$). Given a matrix $A\in\Rset^{n\times n}$, with entries $a_{ij}$ its entry-wise 1-norm is  $\norme{A}{1}=\sum_{i=1}^n\sum_{j=1}^n\abs{a_{ij}}$ and its Frobenious norm is $\norme{A}{F}^2=\sum_{i=1}^n\sum_{j=1}^na_{ij}^2$. Given a vector $x\in\Rset^n$ and a set $\Sset\subseteq\Rset^n$, $\dist{x}{\Sset}=\inf_{s\in\Sset}\norme{x-s}{}$. The pseudo-inverse of a matrix $A\in\Rset^{m\times n}$ is denoted with $A^\flat$.\\
               The set $\Xset\subseteq\Rset^n$ is Robust Positively Invariant (RPI) for $x(t+1)=f(x(t),w(t))$, $w(t)\in\Wset\subseteq\Rset^m$ if $x(t)\in\Xset\imply f(x(t),w(t))\in\Xset\mbox{, }\forall w(t)\in\Wset$. The RPI set $\bar\Xset$ is maximal if it includes every other RPI set. The set $\Xset\subseteq\Rset^n$ is positively invariant for $x(t+1)=f(x(t))$ if $x(t)\in\Xset\imply f(x(t))\in\Xset$.
               The set $\Xset\subseteq\Rset^n$ is a $\lambda$-contractive RPI set, with $\lambda\in[0,1)$ for $x(t+1)=f(x(t))$ if $x(t)\in\Xset\imply f(x(t))\in\lambda\Xset$.
A $\CC$-set is a set that is compact, convex and contains the origin.

          \section{Distributed State Estimator (DSE)}
              \label{sec:distrstateesit}
               We consider a discrete-time Linear Time Invariant (LTI) system
               \begin{equation}
                 \label{eq:model}
                 \begin{aligned}
                   \mbf\px &= \mbf{Ax+Bu+Dw}\\
                   \mbf y &= \mbf{Cx}
                 \end{aligned}
               \end{equation}
               where $\mbf x\in\Rset^n$, $\mbf u\in\Rset^m$, $\mbf y\in\Rset^p$ and $\mbf w\in\Rset^r$ are the state, the input, the output and the disturbance, respectively, at time $t$ and $\mbf\px$ stands for $\mbf x$ at time $t+1$. The state is partitioned into $M$ state vectors $\subss x i\in\Rset^{n_i}$, $i\in\MM=1:M$ such that $\mbf x=(\subss x 1,\ldots,\subss x M)$  and $n=\sum_{i\in\MM}n_i$. Similarly, the input, the output and the disturbance are partitioned into $M$ vectors $\subss u i\in\Rset^{m_i}$,  $\subss y i\in\Rset^{p_i}$,  $\subss w i\in\Rset^{r_i}$, $i\in\MM$ such that $\mbf u=(\subss u 1,\ldots,\subss u M)$, $m=\sum_{i\in\MM}m_i$, $\mbf y=(\subss y 1,\ldots,\subss y M)$, $p=\sum_{i\in\MM}p_i$, $\mbf w=(\subss w 1,\ldots,\subss w M)$ and $r=\sum_{i\in\MM}r_i$.

               We assume the dynamics of the $i$-th subsystem is given by
               \begin{equation}
                 \label{eq:subsystem}
                 \begin{aligned}
                   \subss\Sigma i:\quad\subss \px i&=A_{ii}\subss x i+B_i\subss u i+\sum_{j\in\NN_i}A_{ij}\subss x j+D_i\subss w i\\
                   \subss y i&=C_{i}\subss x i
                 \end{aligned}
               \end{equation}
               where $A_{ij}\in\Rset^{n_i\times n_j}$, $i,j\in\MM$, $B_i\in\Rset^{n_i\times m_i}$, $D_i\in\Rset^{n_i\times r_i}$, $C_i\in\Rset^{p_i\times n_i}$ and $\NN_i$ is the set of parents of subsystem $i$ defined as $\NN_i=\{j\in\MM:A_{ij}\neq 0, i\neq j\}$. Since $\subss y i$ depends on the local state $\subss x i$ only, subsystems $\subss\Sigma i$ are output-decoupled and then $\mbf{C}=\diag(C_1,\ldots,C_M)$. Similarly, subsystems $\subss\Sigma i$ are input- and disturbance-decoupled, i.e. $\mbf{B}=\diag(B_1,\ldots,B_M)$ and $\mbf{D}=\diag(D_1,\ldots,D_M)$. We also assume 
               \begin{equation}
                 \label{eq:Wbound}
                 \subss w i\in\Wset_i\subset\Rset^{r_i}
               \end{equation}
%               where $\Wset_i\subset\Rset^{r_i}$ is a bounded set.

               In this section we propose a DSE for (\ref{eq:model}). We define for $i\in\MM$ the local state estimator
               \begin{equation}
                 \label{eq:subestimator}
                 \begin{aligned}
                   \subss{\tilde\Sigma}i:\quad\subss \tpx i =A_{ii}\subss \tx i+B_i\subss u i-L_{ii}(\subss y i-C_i\subss\tx i)+\\\sum_{j\in\NN_i}A_{ij}\subss \tx j-\sum_{j=1}^M\delta_{ij}L_{ij}(\subss y j-C_j\subss\tx j)\\
                 \end{aligned}
               \end{equation}
               where $\subss\tx i\in\Rset^{n_i}$ is the state estimate, $L_{ij}\in\Rset^{n_i\times p_j}$ are gain matrices and $\delta_{ij}\in\{0,1\}$. Hereafter we assume $\delta_{ij}=0$ and $L_{ij}=0$ if $j\not \in \NN_i$. This implies that $\subss{\tilde\Sigma} i$ depends only on local variables ($\subss\tx i$, $\subss u i$ and $\subss y i$) and parents' variables ($\subss\tx j$ and $\subss y j$, $j\in\NN_i$). Binary parameters $\delta_{ij}$, $j\in\NN_i$ can be chosen to take advantage of the knowledge of parents' outputs ($\delta_{ij}=1$) or to reduce the amount of information received form parents ($\delta_{ij}=0$).

               Defining the state estimation error as
               \begin{equation}
                 \label{eq:errori}
                 \subss e i=\subss x i-\subss\tx i,
               \end{equation}
               from (\ref{eq:subsystem}), \eqref{eq:subestimator} and (\ref{eq:errori}), we obtain the local error dynamics
               \begin{equation}
                 \label{eq:erroridyn}
                 \subss\pe i=\bar{A}_{ii}\subss e i+\sum_{j\in\NN_i}\bar{A}_{ij}\subss e j+D_i\subss w i \\
               \end{equation}
               where $\bar{A}_{ii}=A_{ii}+L_{ii}C_i$ and $\bar{A}_{ij}=A_{ij}+\delta_{ij}L_{ij}C_j$, $i\neq j$. Our main goal is to solve the following problem.
               \begin{prob}
                 \label{prob:estimator_properties}
                 Design local state estimators $\subss{\tilde\Sigma} i$, $i\in\MM$ that 
                 \begin{description}
                 \item[(a)] are nominally convergent, i.e. when $\Wset=\{0\}$ it holds
                   \begin{equation}
                     \label{eq:conv}
                     \norme{\subss e i(t)}{}\rightarrow 0 \mbox{ as  }t\rightarrow\infty 
                   \end{equation}
                 \item[(b)] guarantee 
                   \begin{equation}
                     \label{eq:bounderror}
                     \subss e i(t)\in\Eset_i,~\forall t\geq 0
                   \end{equation}
                   where $\Eset_i\subseteq\Rset^{n_i}$ are prescribed sets containing the origin in their interior. $\square$
                 \end{description}
               \end{prob}
               Defining the collective variable $\mbf e=(\subss e 1,\ldots,\subss e M)\in\Rset^n$, from (\ref{eq:erroridyn}) one obtains the collective dynamics of the estimation error 
               \begin{equation}
                 \label{eq:errordyn}
                 \begin{aligned}
                   \mbf\pe&=\mbf{\bar{A}}\mbf e+\mbf D\mbf w\\
                 \end{aligned}
               \end{equation}
               where the matrix $\mbf{\bar{A}}$ is composed by blocks $\bar{A}_{ij}$, $i,j\in\MM$.

               We equip system \eqref{eq:errordyn} with constraints $\mbf e\in\Eset=\prod_{i\in\MM}\Eset_i$ and $\mbf w\in\Wset=\prod_{i\in\MM}\Wset_i$. In Section~\ref{sec:pracRPI} we address Problem~\ref{prob:estimator_properties} under the following assumptions
               \begin{assum}
                 \label{ass:Foschur}
                 The matrices $\bar{A}_{ii}$, $i\in\MM$ are Schur.
               \end{assum}
               \begin{assum}
                 \label{ass:EWset}
                 The sets $\Eset_i$ and $\Wset_i$, $i\in\MM$ are $\CC$-sets.
               \end{assum}
               
               We highlight that if $\mbf L$ is such that $\mbf{\bar A}$ is Schur, then property \eqref{eq:conv} holds. If, in addition, Assumption~\ref{ass:EWset} holds, then there is an RPI set $\Omega\subset \Eset$ for the constrained system \eqref{eq:errordyn} (see \cite{Kolmanovsky1998}) and $\mbf{e}(0)\in\Omega$ guarantees property \eqref{eq:bounderror}. Remarkably, when sets $\Eset_i$ and $\Wset_i$ are polytopes, an RPI set $\Omega$ can be found solving a Linear Programming (LP) problem \cite{Rakovic2010}. However the LP problem includes the collective model \eqref{eq:model} in the constraints and computations become prohibitive for large $n$.
               
               In absence of coupling between subsystems (i.e. $A_{ij}=0$, $i\neq j$) the estimator dynamics \eqref{eq:subestimator} and error dynamics \eqref{eq:erroridyn} are decoupled as well. Therefore, under Assumptions~\ref{ass:Foschur} and \ref{ass:EWset}, properties \eqref{eq:conv} and \eqref{eq:bounderror} can be guaranteed computing RPI sets $\Omega_i\subseteq \Eset_i$ for each local error dynamics and requiring $\subss e i (0)\in\Omega_i$. Furthermore, if $\Eset_i$ and $\Wset_i$ are polytopes, the computation of sets $\Omega_i$, $i\in\MM$ amounts to the solution of $M$ LP problems that can be solved in parallel using computational resources collocated with subsystems. In order to propose a partially decentralized design procedure in presence of coupling between subsystems one has to take into account how coupling propagates errors between subsystems. As we will show in the next section, the notion of practical robust positive invariance, proposed in \cite{Rakovic2011a} allows one to study precisely this issue and offers a computationally feasible, yet conservative, procedure for solving Problem~\ref{prob:estimator_properties}.

          \section{Practical robust positive invariance for state estimation}
               \label{sec:pracRPI}

               In this section, we show how the main results of \cite{Rakovic2011a}, applied to the error dynamics \eqref{eq:erroridyn} equipped with constraints \eqref{eq:Wbound} and \eqref{eq:bounderror}, allow one to guarantee properties \eqref{eq:conv} and \eqref{eq:bounderror}.

               Given a collection of sets $\Sset=\{\Sset_i,~i\in\MM\}$, $\Sset_i\subset\Rset^{n_i}$ and a set $\Theta\subset\Rset_+^M$, we define a parameterized family of sets $\SSS(\Sset,\Theta)=\{(\theta_1\Sset_1,\ldots,\theta_M\Sset_M):\theta\in\Theta\}$, where $\theta=(\theta_1,\ldots,\theta_M)$. Intuitively, scalars $\theta_i$ can be interpreted as scaling factors. 
               \begin{defi}
                 The family of sets $\SSS(\Sset,\Theta)$ is practical Robust Positive Invariant (pRPI) for the constrained local error dynamics given by \eqref{eq:erroridyn}, \eqref{eq:Wbound} and \eqref{eq:bounderror}, if, for all $i\in\MM$ and all $(\theta_1\Sset_1,\ldots,\theta_M\Sset_M)\in\SSS(\Sset,\Theta)$, one has
                 \begin{subequations}
                   \label{eq:invariancefam}
                   \begin{align}
                     \theta_i\Sset_i&\subseteq\Eset_i  \label{eq:invariancefam-a}\\
                     \bar A_{ii}\theta_i\Sset_i\oplus\bigoplus_{j\in\NN_i}\bar A_{ij}\theta_j\Sset_j\oplus D_i\Wset_i&\subseteq\theta_i^+\Sset_i  \label{eq:invariancefam-b}\\
                     (\theta_1^+\Sset_1,\ldots,\theta_M^+\Sset_M)&\in\SSS(\Sset,\Theta)  \label{eq:invariancefam-c}
                   \end{align}
                 \end{subequations}
               \end{defi}

               \begin{assum}
                 \label{ass:Sset}
                 The sets $\Sset_i$, $i\in\MM$ are $\CC$-sets containing the origin in their interior.
               \end{assum}
               The main issue we will address in the sequel is the following: given $\Sset$, is there a nonempty set $\Theta\subset \Rset^M_+$ such that the family $\SSS(\Sset,\Theta)$ is pRPI ? In order to provide an answer, in \cite{Rakovic2011a} it is proposed to first derive the dynamics of the scaling factors $\theta_i$. More precisely, for  all $i,j\in\MM$ we set
               \begin{equation}
                 \label{eq:muij}
                 \mu_{ij} =\begin{cases}     
                   \min_{\substack{\mu\geq 0}}\{\mu:\bar A_{ij}\Sset_j\subseteq\mu\Sset_i\} & \mbox{ if }i=j\mbox{ or }j\in\NN_i\\
                   0 & \mbox{ otherwise}
                 \end{cases}
               \end{equation}
               \begin{equation}
                 \label{eq:alphai}
                 \alpha_i = \min_{\substack{\beta\geq 0}}\{\beta:D_i\Wset_i\subseteq\beta\Sset_i\}.
               \end{equation}
               and define the collective dynamics of the scaling factors
               \begin{equation}
                 \label{eq:thetadyn}
                 \theta^+=T\theta+\alpha
               \end{equation}
               where the entries of $T\in\Rset^{M\times M}$ are $T_{ij}=\mu_{ij}$ and $\alpha=(\alpha_1,\ldots,\alpha_M)$. 
               It is easy to show that \eqref{eq:thetadyn} guarantees
               \begin{equation}
                 \label{eq:error_property}
                 \subss e i\in\theta_i S_i \Rightarrow \subss e i^+\in\theta_i^+ S_i.
               \end{equation}
               For fulfilling \eqref{eq:invariancefam-a}, let us define
               \begin{equation}
                 \label{eq:theta0}
                 \Theta_0=\{\theta\in\Rset^M:\forall i\in\MM,~\theta_i\Sset_i\subseteq\Eset_i\}
               \end{equation}

               The key assumption used in \cite{Rakovic2011a} for providing a set $\Theta$ that makes $\SSS(\Sset,\Theta)$ a pRPI family is the following one.
               \begin{assum}
                 \label{ass:thetaass}
                 \begin{enumerate}[(i)]
                 \item\label{enu:Tschur} $T$ is Schur.
                 \item\label{enu:bartheta} The unique equilibrium point $\bar\theta$ of system (\ref{eq:thetadyn}) is such that $\bar\theta\in\Theta_0$.
                 \item\label{enu:Theta0} The set $\Theta$ is an invariant set for system (\ref{eq:thetadyn}) and constraint set $\Theta_0$, i.e. $\forall \theta\in\Theta\subseteq\Theta_0$, $\theta^+\in\Theta$.
                 \end{enumerate}
               \end{assum}
               
               \begin{lem}[\cite{Rakovic2011a}]
                 \label{lem:theoRak}
                 Let Assumptions \ref{ass:Foschur}-\ref{ass:thetaass} hold. Then,
                 \begin{description}
                 \item[(i)] there is a non-trivial convex and compact positively invariant set $\Theta$ for system \eqref{eq:thetadyn} equipped with constraints $\theta\in\Theta_0$; 
                 \item[(ii)] $\SSS(\Sset,\Theta)$ is pRPI for  \eqref{eq:erroridyn} with constraints \eqref{eq:Wbound} and \eqref{eq:bounderror}. $\square$
                 \end{description}
               \end{lem}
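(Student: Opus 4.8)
The plan is to prove items (i) and (ii) separately; item (ii), and in particular the set inclusion \eqref{eq:invariancefam-b}, is the computational core, while item (i) is essentially a monotonicity argument.

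\textbf{Item (i).} First I would note that $T\ge 0$ entrywise, since its entries are the nonnegative numbers $\mu_{ij}$ from \eqref{eq:muij}. Combined with the assumption that $T$ is Schur (Assumption~\ref{ass:thetaass}), this gives that $I-T$ is invertible with nonnegative inverse $\sum_{k\ge 0}T^k$, so \eqref{eq:thetadyn} has the unique equilibrium $\bar\theta=(I-T)^{-1}\alpha\in\Rset_+^M$, and $\bar\theta\in\Theta_0$ by Assumption~\ref{ass:thetaass}. I would then take
\[
\Theta:=\{\theta\in\Rset_+^M:\ \theta\le\bar\theta\},
\]
with the inequality meant componentwise: this is a box, hence non-trivial (it contains $0$ and $\bar\theta$), convex, and compact (bounded since $\bar\theta$ is finite). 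Positive invariance follows from monotonicity of the affine map $\theta\mapsto T\theta+\alpha$: if $0\le\theta\le\bar\theta$ then, using $T\ge 0$ and $\alpha\ge 0$, one has $0\le T\theta+\alpha\le T\bar\theta+\alpha=\bar\theta$. Finally $\Theta\subseteq\Theta_0$: by Assumption~\ref{ass:Sset} each $\Sset_i$ is convex and contains the origin, so $0\le\theta_i\le\bar\theta_i$ implies $\theta_i\Sset_i\subseteq\bar\theta_i\Sset_i\subseteq\Eset_i$, i.e.\ $\theta\in\Theta_0$ by \eqref{eq:theta0}. (Alternatively, one may start from the set $\Theta$ postulated in Assumption~\ref{ass:thetaass} and replace it by its closed convex hull, which stays invariant because $\theta\mapsto T\theta+\alpha$ is affine and stays inside $\Theta_0$ because $\Theta_0$ is closed and convex.)

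\textbf{Item (ii).} Fix any $\theta\in\Theta$; I must verify the three conditions defining a pRPI family. Condition \eqref{eq:invariancefam-a} is immediate because $\Theta\subseteq\Theta_0$ and, by \eqref{eq:theta0}, $\Theta_0$ is exactly the set of $\theta$ with $\theta_i\Sset_i\subseteq\Eset_i$ for all $i$. Condition \eqref{eq:invariancefam-c} is immediate from the positive invariance of $\Theta$ proved in item (i): $\theta^+=T\theta+\alpha\in\Theta$, hence $(\theta_1^+\Sset_1,\dots,\theta_M^+\Sset_M)\in\SSS(\Sset,\Theta)$. The substance is \eqref{eq:invariancefam-b}. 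Using linearity of $\bar A_{ij}$, the fact that $\theta_i\ge 0$, the defining inclusions $\bar A_{ii}\Sset_i\subseteq\mu_{ii}\Sset_i$ and $\bar A_{ij}\Sset_j\subseteq\mu_{ij}\Sset_i$ for $j\in\NN_i$ from \eqref{eq:muij}, the inclusion $D_i\Wset_i\subseteq\alpha_i\Sset_i$ from \eqref{eq:alphai}, and monotonicity of the Minkowski sum, I get
\[
\bar A_{ii}\theta_i\Sset_i\oplus\bigoplus_{j\in\NN_i}\bar A_{ij}\theta_j\Sset_j\oplus D_i\Wset_i\ \subseteq\ (\mu_{ii}\theta_i)\Sset_i\oplus\bigoplus_{j\in\NN_i}(\mu_{ij}\theta_j)\Sset_i\oplus\alpha_i\Sset_i.
\]
I would then invoke the elementary identity $c_1C\oplus\cdots\oplus c_kC=(c_1+\cdots+c_k)C$, valid for a convex set $C$ containing the origin and scalars $c_\ell\ge 0$, applied with $C=\Sset_i$ (convex by Assumption~\ref{ass:Sset}), to collapse the right-hand side to $\big(\mu_{ii}\theta_i+\sum_{j\in\NN_i}\mu_{ij}\theta_j+\alpha_i\big)\Sset_i$. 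Since $\mu_{ij}=0$ for $j\notin\NN_i\cup\{i\}$ by \eqref{eq:muij}, the scalar equals $\sum_{j=1}^M\mu_{ij}\theta_j+\alpha_i=(T\theta+\alpha)_i=\theta_i^+$, which is precisely \eqref{eq:invariancefam-b}.

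\textbf{Expected obstacle.} No single step is deep; what requires care is the bookkeeping. One must check that $\mu_{ij}$ and $\alpha_i$ are well defined and finite (because $\bar A_{ij}\Sset_j$ and $D_i\Wset_i$ are compact, by Assumptions~\ref{ass:EWset} and \ref{ass:Sset}, while $\Sset_i$ has the origin in its interior, so the dilations $\mu\Sset_i$ eventually absorb any bounded set), and one must repeatedly use the scalar/set arithmetic $a(b\Sset_i)=(ab)\Sset_i$ and $A\subseteq B\Rightarrow aA\subseteq aB$ for $a,b\ge 0$, as well as the Minkowski-sum identity, all of which hinge on the convexity and origin-containment granted by Assumptions~\ref{ass:EWset} and \ref{ass:Sset}. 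The one spot to be attentive is making sure the scalar obtained after summing the Minkowski terms is literally the $i$-th component of the affine update in \eqref{eq:thetadyn}; this is exactly where the convention $\mu_{ij}=0$ for $j\notin\NN_i\cup\{i\}$ in \eqref{eq:muij} enters.
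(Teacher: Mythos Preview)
The paper does not prove this lemma; it is quoted from \cite{Rakovic2011a} (note the citation in the lemma header and the $\square$ ending the statement, with no proof following). The appendices prove only Propositions~\ref{prop:unplug}--\ref{prop:projinvariant}. So there is no in-paper argument to compare yours against.

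On its own merits your proof is correct. The verification of \eqref{eq:invariancefam-b} via the defining inclusions in \eqref{eq:muij}--\eqref{eq:alphai} together with the convexity identity $c_1C\oplus\cdots\oplus c_kC=(\sum_\ell c_\ell)C$ is precisely the mechanism the paper alludes to when it says that \eqref{eq:thetadyn} ``guarantees'' \eqref{eq:error_property}, and your handling of \eqref{eq:invariancefam-a} and \eqref{eq:invariancefam-c} through $\Theta\subseteq\Theta_0$ and positive invariance is clean. One small point on item (i): your explicit box $\Theta=\{\theta\in\Rset_+^M:\theta\le\bar\theta\}$ degenerates to $\{0\}$ in the nominal case $\alpha=0$ (i.e.\ $\Wset=\{0\}$, which the paper explicitly uses just after the lemma), so it is not ``non-trivial'' there. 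The fix is easy: since $T\ge 0$ is Schur, $v:=(I-T)^{-1}\One_M\ge\One_M>0$ satisfies $Tv=v-\One_M\le v$, so for any $\epsilon>0$ small enough that $[0,\epsilon v]\subseteq\Theta_0$ the box $[0,\epsilon v]$ is a non-trivial convex compact invariant set for $\theta^+=T\theta$; your construction and this one can then be combined (or one simply takes the maximal invariant set inside $\Theta_0$, as Algorithm~\ref{alg:practdec} does in Step~(\ref{alg:step3cen})).
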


               Lemma~\ref{lem:theoRak} guarantees that
               \begin{equation}  
                 \label{eq:init}
                 \begin{aligned}
                   \theta(0)\in&\Theta \mbox{ and } \subss e i (0)\in\theta_i(0)\Sset_i,~ \forall i\in\MM \Rightarrow \\
                   & \subss e i (t)\in \theta_i(t)\Sset_i,~\forall i\in \MM, ~\forall t\geq 0 
                 \end{aligned}
               \end{equation}
               Furthermore, as shown in \cite{Rakovic2011a}, $\dist{\subss e i(t)}{\bar\theta_i\Sset_i}\rightarrow 0$ as $t\rightarrow\infty$. In the nominal case, i.e. $\Wset=\{0\}$, one has $\alpha=0$ in \eqref{eq:thetadyn}.  Then $\bar\theta=0$ and property \eqref{eq:conv} is guaranteed. Also \eqref{eq:bounderror} holds since, from \eqref{eq:init} and \eqref{eq:invariancefam-a} one has $\subss e i (t) \in\theta_i(t)\Sset_i\subseteq \Eset_i$. Therefore, Problem~\ref{prob:estimator_properties} is solved if we can design local state estimators fulfilling the assumptions of Lemma~\ref{lem:theoRak}. A design procedure to achieve this goal is proposed in Section~\ref{sec:computational}.
               \begin{rem}
                 \label{rmk:init}
                 Note that, according to \eqref{eq:init}, the initialization of the local estimators requires to find a suitable initial state $\theta (0)\in \Theta$ for system~\eqref{eq:thetadyn} and this is a  centralized operation. In order to allow each estimator to locally compute its initial state, one can build offline an inner box approximation $\bar \Theta=\prod_{i=1}^M [ \underline \theta _i,\bar \theta_i]$ contained in $\Theta$ and choose $ \subss{\tilde x} i (0) $  such that $\subss x i (0)-\subss{\tilde x} i (0) \in [ \underline \theta _i,\bar \theta_i]$.
               \end{rem}
               
          \section{Design of local estimators}
               \label{sec:computational}
               In this section, we propose a method to design the distributed state estimator presented in Sections \ref{sec:distrstateesit} and \ref{sec:pracRPI}. The key issue is how to compute suitable gains $L_{ij}$ and binary variables $\delta_{ij}$ such that Assumption~\ref{ass:thetaass} holds. From now on we consider polytopic sets $\Eset_i$, $\Wset_i$ and $\Sset_i$, $i\in\MM$ verifying Assumptions ~\ref{ass:EWset} and \ref{ass:Sset}. Without loss of generality we can write
               \begin{subequations}
                 \begin{align}
                   \label{eq:setspolyE}\Eset_i &= \{h_{i,\tau}^T\subss{e}{i}\leq 1,\forall \tau\in 1:\bar{\tau}_i\} = \{ \HH_i\subss{e}{i}\leq \One_{\bar{\tau}_i} \} \\
                   \label{eq:setspolyW}\Wset_i &= \{f_{i,\upsilon}^T\subss{w}{i}\leq 1,\forall \upsilon\in 1:\bar{\upsilon}_i\} = \{ \FF_i\subss{w}{i}\leq \One_{\bar{\upsilon}_i} \} \\
                   \label{eq:setspolyS}\Sset_i &= \{g_{i,\psi}^T\subss{s}{i}\leq 1,\forall \psi\in 1:\bar{\psi}_i\} = \{ \GG_i\subss{s}{i}\leq \One_{\bar{\psi}_i} \}
                 \end{align}
               \end{subequations}
               where $\HH_i=(h_{i,1}^T,\ldots,h_{i,\bar \tau_i}^T)\in\Rset^{\bar{\tau}_i\times n_i}$, $\FF_i=(f_{i,1}^T,\ldots,f_{i,\bar \upsilon_i}^T)\in\Rset^{\bar{\upsilon}_i\times r_i}$ and $\GG_i=(g_{i,1}^T,\ldots,g_{i,\bar \psi_i}^T)\in\Rset^{\bar{\psi}_i\times n_i}$. The design procedure is summarized in Algorithm 1 that is composed by three parts.               
               \begin{algorithm}[!htb]
                 \caption{}
                 \label{alg:practdec}
                 \textbf{Input}: polytopic sets $\Eset_i$, $\Wset_i$, $i\in\MM$ verifying Assumption~\ref{ass:EWset}.\\
                 \textbf{Output}: A pRPI family of sets $\SSS(\Sset,\Theta)$.\\
                 
                 \begin{enumerate}[(A)]
                 \item\label{alg:dec}\emph{Decentralized steps}. For all $i\in\MM$,
                   \begin{enumerate}[(I)]
                   \item\label{alg:step1dec} compute the matrix $L_{ii}$ such that $\bar A_{ii}$ is Schur and has as many zero eigenvalues as possible;
                   \item\label{alg:step2dec} compute a $\lambda_i$-contractive set $\Sset_i$ for 
                     \begin{equation}
                       \label{eq:errorilocal}
                       \subss \pe i = \bar A_{ii}\subss e i
                     \end{equation}
                     verifying $\Sset_i\subseteq\Eset_i$ and set $\mu_{ii}=\lambda_i$;                
                   \item\label{alg:step3dec} compute $\alpha_i$ as in \eqref{eq:alphai}.
                   \end{enumerate}
                   
                 \item\label{alg:dist}\emph{Distributed steps}. For all $i\in\MM$,
                   \begin{enumerate}[(I)]
                   \item\label{alg:step1dist} if $\delta_{ij}=1$, compute the matrix $L_{ij}$, $\forall j\in\NN_i$ solving
                     \begin{equation}
                       \label{eq:Lijopt}
                       \min_{\substack{L_{ij}}} \norme{\GG_i\bar A_{ij}\GG_j^\flat}{p}
                     \end{equation}
                     where either $p=1$ or $p=F$.
                   \item\label{alg:step2dist} compute $\mu_{ij}$ as in \eqref{eq:muij}.
                   \end{enumerate}
                   
                 \item\label{alg:cen}\emph{Centralized steps}
                   \begin{enumerate}[(I)]
                   \item\label{alg:step1cen} if matrix $T$ is not Schur \textbf{stop};
                   \item\label{alg:step2cen} compute set $\Theta_0$ as in \eqref{eq:theta0} and the equilibrium point $\bar\theta$ of system \eqref{eq:thetadyn}. If $\bar\theta\notin\Theta_0$ \textbf{stop};
                   \item\label{alg:step3cen} compute the maximal invariant set $\Theta_\infty$ of system \eqref{eq:thetadyn} equipped with constraint $\Theta_0$;
                   \item\label{alg:step4cen} compute an inner box approximation $\bar\Theta$ of $\Theta_\infty$.
                   \end{enumerate}
                 \end{enumerate}
               \end{algorithm}
               Operations in part (A) can be executed in parallel using computational resources associated with subsystems, i.e. in a decentralized fashion. Steps in part (B) have a distributed nature, meaning that computations are decentralized but they can be performed only after each system has received suitable pieces of information from its parents. Finally, design steps in part (C) require centralized computations involving only  the $M$-th order system \eqref{eq:thetadyn}. Next, we comment each step of Algorithm~\ref{alg:practdec} in details.
               
               \subsection{Part (A)}
                    \label{subs:partA}
                    Step (\ref{alg:step1dec}) is the easiest one and it can be performed only if pairs $(A_{ii},C_i)$, $i\in\MM$ are detectable. The requirement of placing eigenvalues of $\bar A_{ii}$ in zero is motivated by step (\ref{alg:step2dec}). 
               
                    The computation of sets $\Sset_i$ as in step  (\ref{alg:step2dec}) has been suggested in \cite{Rakovic2011a} and it is based on the argument that sets $(1-\lambda_i)$ can be used for compensating coupling terms in the error dynamics. Remarkably, using the efficient procedures proposed in \cite{Rakovic2010}, the computation of a set $\Sset_i$ amounts to solving the optimization problem
                    \begin{subequations}
                      \label{eq:lplambda}
                      \begin{align} 
                        \label{eq:lplambda1}\PP_i(\Sset_i^0,k_i):&~\min_{\gamma_i,\beta_i,\{\Sset_i^s\}_{s=1}^{k_i}}\gamma_i\\
                        &\label{eq:lplambda2}\gamma_i\in[0,1),\quad\Sset_i^{k_i}\subseteq\gamma_i\Sset_i^0\\
                        &\label{eq:lplambda3}\beta_i\in\Rset_+,\quad\bigoplus_{s=0}^{k_i-1}\Sset_i^s\subseteq\beta_i\Eset_i\\
                        &\label{eq:lplambda4}\Sset_i^{s}=\bar A_{ii}^{s}\Sset_i^{0},\mbox{ }\forall s=1,\ldots,k_i
                      \end{align}
                    \end{subequations} 
                    where $k_i\in \Nset$ and the set $\Sset_i^0\subset \Rset^{n_i}$ are provided as inputs. In particular, \eqref{eq:lplambda} is an LP problem and the set $\Sset_i$ can be obtained as $\Sset_i=\beta_i^{-1}\bigoplus_{s=0}^{k-1}\Sset_i^s$. Furthermore, the contractivity parameter is $\lambda_i=\frac{\delta_i+\gamma_i^*-1}{\delta_i}$, where $\gamma_i^*$ is a solution to \eqref{eq:lplambda} and $\delta_i=\min_{\tilde\delta}\{\tilde\delta:\bigoplus_{s=0}^{k_i-1}\Sset_i^s\subseteq\tilde\delta\Sset_i^0,\tilde\delta\geq 1\}$. Note that also $\delta_i$ can be computed solving an LP problem. As shown in \cite{Rakovic2010}, since the matrix $\bar A_{ii}$ is Schur, then, given a $\CC$-set $\Sset_i^0$, there exists a sufficiently large $k_i$ such that problem \eqref{eq:lplambda} is feasible. Moreover, if all eigenvalues  of $\bar A_{ii}$ are zero, feasibility of \eqref{eq:lplambda} can be guaranteed setting $k_i=n_i$. Indeed since $\bar A_{ii}^{n_i}=\Zero_{n_i\times n_i}$ we have $\Sset_i^{n_i}=\{0\}$ and hence, irrespectively of $\Sset_i^0$, constraints \eqref{eq:lplambda2} hold with $\alpha_i=0$. Moreover, since from~\eqref{eq:lplambda4} sets $\{\Sset_i^s\}_{s=1}^{k_i-1}$ are polytopes containing the origin, then there exists $\beta_i$ such that constraints \eqref{eq:lplambda3} hold. We highlight that the scalar $\mu_{ii}$ computed as in \eqref{eq:muij} is equal to the contractivity parameter $\lambda_i$. 

                    Step (\ref{alg:step3dec}) focuses on the computation of scalars $\alpha_i$. From \eqref{eq:alphai} and \eqref{eq:setspolyW}, using procedures proposed in \cite{Kolmanovsky1998}, we have $\alpha_{i} = \max_{\substack{\psi\in 1:\bar{\psi}_i}}\{z_i\}$ where
                    \begin{equation}
                      \label{eq:alphaicomp}
                      \begin{aligned}
                        z_{i} = &\max_{\substack{\subss w i}}~g_{i,\psi}D_i\subss w i\\
                        &\FF_i\subss w i\leq\One_{\bar{\upsilon}_i}
                      \end{aligned}
                    \end{equation}
                    Therefore, step  (\ref{alg:step3dec}) requires the solution to the $\psi_i$ LP problems \eqref{eq:alphaicomp}.

               \subsection{Part (B)}
                    For the computation of matrices $L_{ij}$ and parameters $\mu_{ij}$, each system $\subss\Sigma i$ needs to receive the matrix $C_j$ and the set $\Sset_j$ from parents $j\in\NN_i$ such that $\delta_{ij}=1$. 

                    In step (\ref{alg:step1dist}), if $\delta_{ij}=1$, the computation of matrices $L_{ij}$, $j\in\NN_i$ is required. Since the choice of $L_{ij}$ affects the coupling term $\bar A_{ij}$ and hence the Schurness of matrix $T$, we propose to reduce the magnitude of coupling by minimizing the magnitude of $\bar A_{ij}$ in \eqref{eq:Lijopt}, where $\GG_i$ and $\GG_j^\flat$ allow us to take into account the size of sets $\Sset_i$ and $\Sset_j$, respectively. More precisely, it can be shown that the term $\norme{\GG_i\bar A_{ij}\GG_j^\flat}{p}$ is a  measure of how much the coupling term  $\bar A_{ij}\subss s j$, $j\in\NN_i$  affects the fulfillment of the constraint $\subss s i\in\Sset_i$. We highlight that the minimization of  $\| \GG_i\bar A_{ij}\GG_j^\flat\|_1$ in \eqref{eq:Lijopt} amounts to an LP problem and the minimization of $\| \GG_i\bar A_{ij}\GG_j^\flat\|_F$ can be recast into a Quadratic Programming (QP) problem. So far the parameters $\delta_{ij}$ have been considered fixed. However, if in step (\ref{alg:step1dist}) one obtains $L_{ij}=0$ for some $j\in\NN_i$, it is impossible to reduce the magnitude of the coupling term $\bar A_{ij}$ and, from \eqref{eq:subestimator}, the knowledge of $\subss y j$ is useless. This suggests to revise the choice of $\delta_{ij}$  and set $\delta_{ij}=0$. In step (\ref{alg:step2dist}),  since $\Sset_i$ are polytopes, using procedures proposed in \cite{Kolmanovsky1998} we  can compute scalars $\mu_{ij}$ as
                    \begin{equation}
                      \label{eq:muijcomp}
                      \mu_{ij} = \max_{\substack{\psi\in 1:\bar{\psi}_i}}\{\max_{\substack{\subss s j}} g_{i,\psi}\bar A_{ij}\subss s j:\GG_j\subss s j\leq\One_{\bar{\psi}_j}\}.
                    \end{equation}
                    that requires the solution of $\bar{\psi}_i$ LP problems.

               \subsection{Part (C)}
                    In step (\ref{alg:step1cen}) we check the Schurness of matrix $T$. If the test fails, Assumption~\ref{ass:thetaass}-(i) cannot be fulfilled and the only possibility is to restart the algorithm after increasing the number of variables $\delta_{ij}$ that are equal to one.

                    In step (\ref{alg:step2cen}), since the sets $\Sset_i$ and $\Eset_i$ are polytopes, using results from \cite{Kolmanovsky1998} the computation of the set $\Theta_0$ can be done as follows 
                    \begin{equation}
                      \label{eq:Theta0comp}
                      \begin{aligned}
                        \Theta_0 &= \prod_{i=1}^M[0,\tilde \theta_i]\\
                        \tilde\theta_i &= (\max_{\substack{\tau\in 1:\bar{\tau}_i}}\{\sup_{\substack{\subss s i}} h_{i,\tau}\subss s i:\GG_i\subss s i\leq\One_{\bar{\psi}_i}\})^{-1}.
                      \end{aligned}
                    \end{equation}
                    Moreover, in step (\ref{alg:step2cen})  we compute the equilibrium point $\bar\theta$ of system \eqref{eq:thetadyn}. If $\bar\theta\notin\Theta_0$ we can not guarantee property~\eqref{eq:bounderror} and therefore the algorithm stops. Note that if $\Wset_i=\{0\}$, $\forall i\in\MM$, the equilibrium point $\bar\theta$ is the origin and hence $\bar\theta\in\Theta_0$ by construction. 

                    According to Assumption~\ref{ass:thetaass}-\ref{enu:Theta0} , the set $\Theta$ of all feasible contractions $\theta$ is computed as an RPI set for system \eqref{eq:thetadyn} and constraints $\theta\in\Theta_0$. In particular, since $T$ is Schur and $\Theta_0$ is a polytope, using results from \cite{Gilbert1991} we can compute the maximal RPI set $\Theta_{\infty}$ by solving a suitable LP problem.

                    As discussed in Remark~\ref{rmk:init}, a decentralized initialization of state estimators is possible computing  an hyperrectangle $\bar \Theta$ contained in $\Theta_\infty$. This is done in step (\ref{alg:step4cen}). More precisely, using results from \cite{Bemporad2004}, we can set $\bar\Theta=\prod_{i=1}^M[0,\bar \theta_i]$ where 
                    \begin{eqnarray}
                      \label{eq:Thetabox}
                      \bar \theta_i&=&\max_{\substack{\tilde\theta\in\Theta_{\infty}}}~\gamma^T\tilde\theta, \\
                      \gamma&=&(\gamma_1,\ldots,\gamma_M)\nonumber \\
                      \gamma_i &=& (\max_{\substack{\theta}}~\theta_i:\theta\in\Theta_\infty)^{-1}. \label{eq:gammaiLP}
                    \end{eqnarray}
                    
                    As described in  \cite{Bemporad2004}, the vector $\gamma$ is used for maximizing the volume of $\bar\Theta$. From \eqref{eq:Thetabox} and \eqref{eq:gammaiLP} the computation of the hyper-rectangle $\bar\Theta$ requires the solution of $M+1$ LP optimization problems.

          \section{Large-scale systems with variable number of subsystems}
\label{sec:pnp}
               In this section, we discuss the retuning of the DSE when a subsystem is added or removed. We highlight that plugging-in and unplugging of subsystems are here considered as offline operations. In particular, we will show how to preserve properties \eqref{eq:conv} and \eqref{eq:bounderror} without performing all computations required by Algorithm~\ref{alg:practdec}. As a starting point, we consider system \eqref{eq:model} equipped with a DSE designed using Algorithm~\ref{alg:practdec}.

               \subsection{Plug-in operation}
                    Assume the new subsystem $\subss\Sigma{M+1}$ is plugged in and set $\bar\MM=\MM\cup\{M+1\}$. Since the overall system has changed, in principle one has to design the DSE from scratch running Algorithm \ref{alg:practdec}. Note however that Part (A) of Algorithm~\ref{alg:practdec} is decentralized and therefore it has to be executed for the new subsystem only. Part (B) of Algorithm~\ref{alg:practdec} involves only the new subsystem, its parents and its children $\CC_{M+1}=\{j\in\MM: A_{M+1,j}\neq 0, j\neq M+1\}$. In fact, subsystem $\subss \Sigma {M+1}$ needs sets $\Sset_j$ from its parents for computing parameters $\mu_{M+1,j}$, $j\in \NN_{M+1}$. Moreover since children of $\subss \Sigma {M+1}$ have a new parent, they need to know $\Sset_{M+1}$ in order to update parameters $\mu_{k,M+1}$, $k\in\CC_{M+1}$.

                    If Step (\ref{alg:step1cen}) or Step (\ref{alg:step2cen}) fail, we declare that system $\subss\Sigma{M+1}$ can not be added, because the family of sets $\SSS(\Sset,\Theta)$ is not a pRPI. In Algorithm \ref{alg:plugin}  we summarize the computations for updating the DSE that are triggered by the addition of $\subss\Sigma{M+1}$.
                    \begin{algorithm}[!htb]
                      \caption{}
                      \label{alg:plugin}
                      \textbf{Input}: new subsystem $\subss\Sigma{M+1}$ with sets $\Eset_{M+1}$ and $\Wset_{M+1}$.\\
                      \textbf{Output}: an updated pRPI family of sets $\SSS(\Sset,\Theta)$.\\
                      
                      \begin{enumerate}[(A)]
                      \item\label{alg:plugindec}\emph{Decentralized steps}\\
                        For $i=M+1$ execute Steps (\ref{alg:step1dec})-(\ref{alg:step3dec}) of Algorithm \ref{alg:practdec};
                        
                      \item\label{alg:plugindist}\emph{Distributed steps}\\
                        \begin{itemize}
                        \item For subsystem $\subss\Sigma{M+1}$, if $\delta_{M+1,j}=1$, compute the matrix $L_{M+1,j}$, $\forall j\in\NN_{M+1}$ solving $\min_{\substack{L_{M+1,j}}} \norme{\GG_{M+1}\bar A_{M+1,j}\GG_{j}}{p}$, $p=1$ or $p=F$, and then compute $\mu_{M+1,j}$;
                        \item For subsystems $\subss\Sigma{k}$, if $\delta_{k,M+1}=1$, compute the matrix $L_{k,M+1}$, $\forall k\in\CC_{M+1}$ solving $\min_{\substack{L_{k,M+1}}} \norme{\GG_k\bar A_{k,M+1}\GG_{M+1}}{p}$, $p=1$ or $p=F$, and then compute $\mu_{k,M+1}$;
                        \end{itemize}
                      \item\label{alg:plugincen}\emph{Centralized steps}\\
                        Execute steps (\ref{alg:step1cen})-(\ref{alg:step4cen}) of Algorithm \ref{alg:practdec}.
                      \end{enumerate}
                    \end{algorithm}

               \subsection{Unplugging operation}
                    Assume subsystem $\subss\Sigma{q}$, $q\in\MM$ is removed. We will show that no update of the DSE is required  in order to guarantee \eqref{eq:conv} and \eqref{eq:bounderror}. In the following, vectors, matrices and sets with a hat are quantities of the DSE after subsystem $q$  has been removed. As an example, the matrix 
                    $$
                    \hat T = \matr{\mu_{11} & \cdots & \mu_{1,q-1} & \mu_{1,q+1} & \cdots & \mu_{1,M} \\ 
                                              \vdots & \vdots & \vdots & \vdots & \vdots & \vdots \\ 
                                              \mu_{q-1,1} & \cdots & \mu_{q-1,q-1} & \mu_{q-1,q+1} & \cdots & \mu_{q-1,M} \\
                                              \mu_{q+1,1} & \cdots & \mu_{q+1,q-1} & \mu_{q+1,q+1} & \cdots & \mu_{q+1,M} \\ 
                                              \vdots & \vdots & \vdots & \vdots & \vdots & \vdots \\ 
                                              \mu_{M,1} & \cdots & \mu_{M,q-1} & \mu_{M,q+1} & \cdots & \mu_{M,M} \\ 
                                            }\in\Rset^{M-1\times M-1}
                    $$                    
                    is obtained from matrix $T$, by eliminating the $q$-th row and column. Next, we show Assumptions \ref{ass:thetaass}-\ref{enu:Tschur}, \ref{ass:thetaass}-\ref{enu:bartheta} and \ref{ass:thetaass}-\ref{enu:Theta0} are still verified after the removal of $\subss \Sigma q$. \\
Let $\GG=(V,\EE)$ be the coupling graph of \eqref{eq:model}, i.e. a directed graph where vertices in $V=1:M$ are associated to subsystems and $(i,j)\in\EE\Leftrightarrow i\in\NN_j$. In the sequel we assume $\GG$ is strongly connected (see Definition \ref{def:sc_digraph} in \ref{sec:appendixA}). Indeed, if this is not true, then \eqref{eq:model} can be represented as a directed acyclic graph $\mathbb{G}$ whose nodes represent strongly connected subgraphs. In this case, a DSE can be designed for each system corresponding to a subgraph starting from the roots of $\mathbb{G}$.
%                    First using the next result, we can guarantee Assumption \ref{ass:thetaass}-(\ref{enu:Tschur}), after the unplugging of subsystem $q$.
The next proposition concerns Assumption \ref{ass:thetaass}-\ref{enu:Tschur}.
                    \begin{prop}
                      \label{prop:unplug}
                      If the matrix $T\in\Rset^{M\times M}$ in \eqref{eq:thetadyn} is Schur, then also the matrix $\hat T$ is Schur.
                    \end{prop}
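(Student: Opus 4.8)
The plan is to exploit the fact that $T$ is a \emph{nonnegative} matrix: by \eqref{eq:muij} every entry $T_{ij}=\mu_{ij}$ is either $0$ or the minimum of a set of nonnegative reals, so $T\geq 0$ entrywise. Since Schurness of any square matrix is equivalent to $\rho(T)<1$, where $\rho$ denotes the spectral radius, it suffices to show that passing from $T$ to its principal submatrix $\hat T$ cannot increase the spectral radius, i.e. $\rho(\hat T)\leq\rho(T)$.

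First I would make the reduction explicit: $\hat T$ is exactly the principal submatrix of $T$ associated with the index set $\MM\setminus\{q\}$, obtained by deleting simultaneously the $q$-th row and the $q$-th column. Then I would introduce the auxiliary matrix $\tilde T\in\Rset^{M\times M}$ obtained from $T$ by zeroing out its $q$-th row and $q$-th column and leaving all other entries unchanged. By construction $0\leq\tilde T\leq T$ entrywise, and (after the obvious permutation of indices) $\tilde T$ is block diagonal with blocks $\hat T$ and the $1\times 1$ zero block, hence $\rho(\tilde T)=\rho(\hat T)$.

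Next I would invoke monotonicity of the spectral radius on nonnegative matrices. From $0\leq\tilde T\leq T$ one gets, by induction on $k$ (using that left- or right-multiplication by a nonnegative matrix preserves the entrywise order and that powers of nonnegative matrices are nonnegative), $\tilde T^{k}\leq T^{k}$ entrywise for all $k\in\Nset$, so $\norme{\tilde T^{k}}{1}\leq\norme{T^{k}}{1}$. Since $\norme{\cdot}{1}$ is a submultiplicative matrix norm, Gelfand's formula yields $\rho(\tilde T)=\lim_{k\to\infty}\norme{\tilde T^{k}}{1}^{1/k}\leq\lim_{k\to\infty}\norme{T^{k}}{1}^{1/k}=\rho(T)$. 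Combining the two displayed facts, $\rho(\hat T)=\rho(\tilde T)\leq\rho(T)<1$, so $\hat T$ is Schur. (An equivalent route, which I would mention as an alternative, is to note that $T\geq 0$ Schur means $I_M-T$ is a nonsingular M-matrix, and that $I_{M-1}-\hat T$, being a principal submatrix of $I_M-T$, is then again a nonsingular M-matrix, hence $\hat T$ is Schur.)

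There is essentially no hard step here; the only points requiring care are recording that $T\geq 0$ — which is what makes the Perron–Frobenius-type monotonicity applicable — and justifying that deleting row and column $q$ produces a principal submatrix whose nonzero spectrum is unchanged when one pads it back with a zero row and column. I would also note in passing that, because $\GG$ is strongly connected, $T$ is irreducible and the inequality is in fact strict, $\rho(\hat T)<\rho(T)$, although only $\rho(\hat T)<1$ is needed for the statement.
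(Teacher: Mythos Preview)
Your proof is correct and follows essentially the same route as the paper: both exploit nonnegativity of $T$, introduce the zero-padded matrix (your $\tilde T$, the paper's $\T$), identify $\rho(\hat T)$ with $\rho(\tilde T)$ via the block structure, and then use monotonicity of the spectral radius for nonnegative matrices to conclude $\rho(\hat T)\leq\rho(T)<1$. The only minor difference is that the paper invokes Weilandt's theorem (using irreducibility of $T$ from strong connectivity of $\GG$) for the monotonicity step, whereas you give the more elementary Gelfand-formula argument, which does not require irreducibility.
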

                    The proof of Proposition \ref{prop:unplug} can be found in \ref{sec:appendixA}.  The next result guarantees Assumption \ref{ass:thetaass}-\ref{enu:bartheta} still holds after the removal of subsystem $q$.
                    \begin{prop}
                      \label{prop:uniqueeq}
                      For $q\in\MM$, let $\hat\theta=(\theta_1,\ldots,\theta_{q-1},\theta_{q+1},\ldots,\theta_{M-1})$,\\  $\hat \alpha=(\alpha_1,\ldots,\alpha_{q-1},\alpha_{q+1},\ldots,\alpha_{M-1})$ and 
 \begin{equation}
                        \label{eq:hatTheta0}
                        \hat\Theta_0=\{\xi\in\Rset^{M-1} : (\xi_1,\ldots,\xi_{q-1},0,\xi_q,\ldots,\xi_{M-1})\in\Theta_0 \}
                      \end{equation}
                      If Assumption \ref{ass:thetaass}-\ref{enu:bartheta} holds, the unique equilibrium $\hat{\bar\theta}$ of system 
                      \begin{equation}
                        \label{eq:unpluggingthetadyn}
                        \hat\theta^+=\hat T\hat\theta+\hat\alpha
                      \end{equation}
                      is such that $\hat{\bar\theta}\in\hat\Theta_0$.
                    \end{prop}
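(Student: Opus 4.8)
The plan is to exploit that both $T$ and its submatrix $\hat T$ are \emph{nonnegative} matrices (their entries $\mu_{ij}$ are nonnegative by \eqref{eq:muij}, and $\alpha,\hat\alpha\ge 0$ by \eqref{eq:alphai}), so that the relevant equilibria are given by convergent Neumann series with nonnegative entries, and then to compare $\hat{\bar\theta}$ with the restriction of $\bar\theta$ to the surviving indices. First I would record the structural facts: since $T$ is Schur (Assumption~\ref{ass:thetaass}-\ref{enu:Tschur}), Proposition~\ref{prop:unplug} gives that $\hat T$ is Schur; being in addition entrywise nonnegative, $I-\hat T$ is invertible and $(I-\hat T)^{-1}=\sum_{k\ge 0}\hat T^{k}\ge 0$ entrywise. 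Hence \eqref{eq:unpluggingthetadyn} has the unique equilibrium $\hat{\bar\theta}=(I-\hat T)^{-1}\hat\alpha$, and $\hat{\bar\theta}\ge 0$ because $\hat\alpha\ge 0$.

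The core step is the componentwise inequality $\hat{\bar\theta}_i\le\bar\theta_i$ for all $i\neq q$. Up to relabelling I may assume $q=M$ and write, in block form, $T=\left[\begin{smallmatrix}\hat T & b\\ c^{T} & d\end{smallmatrix}\right]$, $\alpha=(\hat\alpha,\alpha_M)$ and $\bar\theta=(\hat\beta,\bar\theta_M)$, where $\hat\beta\in\Rset^{M-1}$ is the restriction of $\bar\theta$ to the surviving indices and $b=(\mu_{1,M},\dots,\mu_{M-1,M})^{T}\ge 0$ collects the couplings into each surviving subsystem from subsystem $q$. The first $M-1$ scalar equations in $\bar\theta=T\bar\theta+\alpha$ read $(I-\hat T)\hat\beta=\hat\alpha+b\,\bar\theta_M$, so that $\hat\beta=\hat{\bar\theta}+(I-\hat T)^{-1}b\,\bar\theta_M$. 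Since $(I-\hat T)^{-1}\ge 0$, $b\ge 0$ and $\bar\theta_M\ge 0$ (the last because $\bar\theta\in\Theta_0$), the added term is nonnegative, which yields $0\le\hat{\bar\theta}\le\hat\beta$ componentwise.

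Finally I would cash this in against the box structure of $\Theta_0$. By \eqref{eq:Theta0comp}, $\Theta_0=\prod_{i=1}^M[0,\tilde\theta_i]$, so Assumption~\ref{ass:thetaass}-\ref{enu:bartheta} yields $0\le\bar\theta_i\le\tilde\theta_i$ for every $i$; together with $0\le\hat{\bar\theta}_i\le\hat\beta_i=\bar\theta_i$ this gives $0\le\hat{\bar\theta}_i\le\tilde\theta_i$ for all $i\neq q$. The vector obtained by inserting a $0$ in slot $q$ of $\hat{\bar\theta}$ therefore lies in $\prod_{i=1}^M[0,\tilde\theta_i]=\Theta_0$ (its $q$-th entry being $0\in[0,\tilde\theta_q]$), which by \eqref{eq:hatTheta0} is exactly the assertion $\hat{\bar\theta}\in\hat\Theta_0$.

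The only genuinely delicate point is the second paragraph: one is tempted to believe $\hat{\bar\theta}$ is the plain restriction of $\bar\theta$, which is false, because deleting subsystem $q$ removes the incoming coupling contribution $\mu_{iq}\bar\theta_q$ from each $\bar\theta_i$; the correct relation is the inequality $\hat{\bar\theta}\le\bar\theta|_{\MM\setminus\{q\}}$, and making it rigorous is precisely where nonnegativity of $(I-\hat T)^{-1}$ and of $b,\bar\theta_q$ is used (equivalently, a monotonicity argument on the iteration \eqref{eq:unpluggingthetadyn}: starting it from $\bar\theta|_{\MM\setminus\{q\}}$, the first step decreases each component and the map $\xi\mapsto\hat T\xi+\hat\alpha$ is order preserving). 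Everything else is bookkeeping.
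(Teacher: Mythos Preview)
Your proof is correct and follows the same overall strategy as the paper: establish the componentwise inequality $\hat{\bar\theta}\le\bar\theta|_{\MM\setminus\{q\}}$, then invoke the box structure $\Theta_0=\prod_i[0,\tilde\theta_i]$ from \eqref{eq:Theta0comp} to conclude.

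The only difference is in how that inequality is obtained. The paper embeds $\hat T$ into $\Rset^{M\times M}$ as the matrix $\T$ (zeroing row and column $q$), observes that $\T^k\le T^k$ entrywise for all $k\ge 0$, and compares the two Neumann series $\bar\tau=\sum_k\T^k\tilde\alpha$ and $\bar\theta=\sum_k T^k\alpha$ termwise. You instead read off the block identity $(I-\hat T)\hat\beta=\hat\alpha+b\,\bar\theta_q$ directly from $\bar\theta=T\bar\theta+\alpha$ and subtract $(I-\hat T)\hat{\bar\theta}=\hat\alpha$ to get $\hat\beta-\hat{\bar\theta}=(I-\hat T)^{-1}b\,\bar\theta_q\ge 0$. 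Your route is slightly more explicit (it yields the exact gap $\hat\beta-\hat{\bar\theta}$), while the paper's route generalizes immediately to deleting several indices at once; both rely on the same monotonicity coming from nonnegativity of $T$, $\alpha$, and $(I-\hat T)^{-1}$.
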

                    The proof of Proposition \ref{prop:uniqueeq} can be found in \ref{sec:appendixB}.  Finally, the following proposition concerns Assumption \ref{ass:thetaass}-\ref{enu:Theta0}.
                    \begin{prop}
                      \label{prop:projinvariant}
                      For $q\in\MM $, the set 
                      \begin{equation}
                        \label{eq:hatTheta}
                        \hat\Theta=\{\hat\theta\in\Rset^{M-1} : (\hat{\theta}_1,\ldots,\hat{\theta}_{q-1},0,\hat{\theta}_q,\ldots,\hat{\theta}_{M-1})\in\Theta_\infty \}
                      \end{equation}
                      is an RPI set for system \eqref{eq:unpluggingthetadyn}.
                    \end{prop}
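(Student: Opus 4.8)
The plan is to prove that $\hat\Theta$ is (positively) invariant for \eqref{eq:unpluggingthetadyn} by lifting trajectories back to $\Rset^M$, where we can invoke that $\Theta_\infty$ is RPI for the full system \eqref{eq:thetadyn} with constraint $\Theta_0$, together with a monotonicity (downward-closedness) property of $\Theta_\infty$. For $\hat\theta\in\Rset^{M-1}$ write $\ell(\hat\theta)=(\hat\theta_1,\dots,\hat\theta_{q-1},0,\hat\theta_q,\dots,\hat\theta_{M-1})\in\Rset^M$ for the vector obtained by inserting a $0$ in position $q$, so that by \eqref{eq:hatTheta} one has $\hat\theta\in\hat\Theta$ if and only if $\ell(\hat\theta)\in\Theta_\infty$.

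First I would fix $\hat\theta\in\hat\Theta$, set $\theta=\ell(\hat\theta)\in\Theta_\infty$, and note that, since $\Theta_\infty$ is RPI for \eqref{eq:thetadyn}, the successor $\theta^+=T\theta+\alpha$ again lies in $\Theta_\infty$. Then I would compare $\theta^+$ with $\ell(\hat T\hat\theta+\hat\alpha)$, the lift of the successor in the reduced dynamics. Because $\hat T$ is $T$ with its $q$-th row and column deleted, $\hat\alpha$ is $\alpha$ with its $q$-th entry deleted, and $\theta_q=0$, the $q$-th column of $T$ is multiplied by zero in the product $T\theta$; hence in every coordinate $i\neq q$ the vectors $\ell(\hat T\hat\theta+\hat\alpha)$ and $\theta^+$ coincide, while in coordinate $q$ one has $\ell(\hat T\hat\theta+\hat\alpha)_q=0\le \sum_{j\neq q}\mu_{qj}\theta_j+\alpha_q=\theta^+_q$, using $\mu_{qj}\ge0$ and $\alpha_q\ge0$. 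Therefore $0\le\ell(\hat T\hat\theta+\hat\alpha)\le\theta^+$ componentwise.

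It then remains to show that $\Theta_\infty$ is downward closed in $\Rset_+^M$, i.e. that $\xi^\star\in\Theta_\infty$ and $0\le\xi\le\xi^\star$ imply $\xi\in\Theta_\infty$; this is the heart of the argument and the only step that genuinely uses structure rather than bookkeeping. Using the standard characterization of the maximal RPI set \cite{Gilbert1991}, $\Theta_\infty=\bigcap_{k\ge0}\{\,\theta:\ T^k\theta+\textstyle\sum_{l=0}^{k-1}T^l\alpha\in\Theta_0\,\}$. Since $\mu_{ij}\ge0$ and $\alpha_i\ge0$, the matrices $T^k$ and the vectors $\sum_{l=0}^{k-1}T^l\alpha$ are nonnegative, so for every $k$ one has $0\le T^k\xi+\sum_{l=0}^{k-1}T^l\alpha\le T^k\xi^\star+\sum_{l=0}^{k-1}T^l\alpha$; and because $\Theta_0$ is the nonnegative box $\prod_{i=1}^M[0,\tilde\theta_i]$ of \eqref{eq:Theta0comp}, the right-hand vector belonging to $\Theta_0$ forces the left-hand vector into $\Theta_0$ as well. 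Hence $\xi\in\Theta_\infty$. Applying this with $\xi^\star=\theta^+$ and $\xi=\ell(\hat T\hat\theta+\hat\alpha)$ yields $\ell(\hat T\hat\theta+\hat\alpha)\in\Theta_\infty$, i.e. $\hat T\hat\theta+\hat\alpha\in\hat\Theta$, which is the desired invariance.

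I expect the downward-closedness step to be the main obstacle: for a general linear system with a general constraint set the maximal RPI set need not be monotone, so the argument relies essentially both on the nonnegativity of $T$ and $\alpha$ (which holds by construction of $\mu_{ij}$ in \eqref{eq:muij} and $\alpha_i$ in \eqref{eq:alphai}) and on $\Theta_0$ being an axis-aligned box anchored at the origin. The remaining steps are routine: the compatibility of deleting the $q$-th row/column of $T$ and the $q$-th entry of $\alpha$ with evaluating the full dynamics at a lifted vector whose $q$-th entry is zero, and the definition of $\hat\Theta$ via $\ell$.
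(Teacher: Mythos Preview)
Your argument is correct and is essentially the paper's approach made rigorous. The paper lifts to $\Rset^M$ and asserts that $\Theta_\infty$ is also invariant for the modified dynamics $\theta^+=\T\theta+\tilde\alpha$ (with the $q$-th row and column of $T$ and the $q$-th entry of $\alpha$ zeroed) by invoking only ``since $0\in\Theta_\infty$'', then reads off the result from the slice $\theta_q=0$; you supply exactly the justification that step needs, namely the downward-closedness of $\Theta_\infty$ in $\Rset_+^M$, which you derive from the nonnegativity of $T$ and $\alpha$ together with the box structure \eqref{eq:Theta0comp} of $\Theta_0$ and the characterization $\Theta_\infty=\bigcap_{k\ge0}\{\theta:T^k\theta+\sum_{l=0}^{k-1}T^l\alpha\in\Theta_0\}$.
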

                    The proof of Proposition \ref{prop:projinvariant} can be found in \ref{sec:appendixC}.  From Proposition \ref{prop:projinvariant} we have that the projection of set $\Theta$ on the coordinates $\hat \theta$ is still an RPI set for \eqref{eq:model} after the removal of subsystem $q$. We also note that the set $\hat\Theta$ is not the maximal RPI, i.e. with a new execution of Step (\ref{alg:step3cen}) of Algorithm \ref{alg:practdec} we could obtain an RPI set $\hat\Theta_\infty$ verifying $\hat\Theta\subseteq\hat\Theta_\infty$. We also note that the projection  $\hat{\bar\Theta}$ of $\bar\Theta$ on the coordinates $\hat \theta$ is a box verifying $\hat{\bar\Theta}\subseteq\hat\Theta$. However, with a new execution of Step (\ref{alg:step4cen}) of Algorithm \ref{alg:practdec} we could obtain a bigger inner box approximation.

          \section{Examples}
               \label{sec:example}
               In this section, we apply the proposed distributed state estimator to a power network system composed by several power generation areas coupled through tie-lines. The dynamics of an area equipped with primary control and linearized around equilibrium value for all variables can be described by the following continuous-time LTI model \cite{Saadat2002}
               \begin{equation}
                 \label{eq:ltipower}
                 \subss{\Sigma}{i}^C:\quad\subss{\dot{x}}{i} = A_{ii}\subss x i + \bar B_{i}\subss{\bar u} i + \sum_{j\in\NN_i}A_{ij}\subss x j + \subss w i
               \end{equation}
               where $\subss x i=(\Delta\theta_i,~\Delta\omega_i,~\Delta P_{m_i},~\Delta P_{v_i})$ is the state, $\subss{\bar u} i = (\Delta P_{ref_i}, \Delta P_{L_i})$ is composed by the control input of each area and the local power load and $\NN_i$ is the sets of parent areas, i.e. areas directly connected to $\subss\Sigma i^C$ through tie-lines. In \eqref{eq:ltipower}, $\subss w i\in\Rset^{n_i}$ is the disturbance term for the $i$-th area and it is bounded in the polytopic set $\Wset_i\subset\Rset^{n_i}$. The matrices of system \eqref{eq:ltipower} are defined as
               \begin{equation*}
                 \label{eq:matrixpower}
                 \begin{aligned}
                   &A_{ii}(\{P_{ij}\}_{j\in\NN_i}) = \matr{ 0 & 1 & 0 & 0 \\ -\frac{\sum_{j\in\NN_i}{P_{ij}} }{2H_i} & -\frac{D_i}{2H_i} & \frac{1}{2H_i} & 0 \\ 0 & 0 & -\frac{1}{T_{t_i}}  & \frac{1}{T_{t_i}} \\ 0 & -\frac{1}{R_iT_{g_i}} & 0 & -\frac{1}{T_{g_i}} }\\
                   &\bar B_{i} = \matr{ 0 & 0 \\ 0 & -\frac{1}{2H_i} \\ 0 & 0 \\ \frac{1}{T_{g_i}} & 0 }\quad A_{ij} = \matr{ 0 & 0 & 0 & 0 \\ \frac{P_{ij}}{2H_i} & 0 & 0 & 0 \\ 0 & 0 & 0  & 0 \\ 0 & 0 & 0 & 0 }.
                 \end{aligned}
               \end{equation*}
               For the meaning of constants as well as parameter values we defer the reader to Section 1 of \cite{Riverso2012f}. We obtain models $\subss\Sigma i$ by discretizing models $\subss\Sigma i^C$ with $1~sec$ sampling time, using exact discretization and treating $\subss{\bar u} i$, $\subss x j,~j\in\NN_i$ and $\subss w i$ as exogenous signals. We note that using the proposed discretization scheme the set of neighbors $\NN_i$ does not change. In the following we propose different design of the distributed state estimator for a power network composed by four areas as in Figure \ref{fig:scenario1} (Scenario 1 of \cite{Riverso2012f}\footnote{For the simulations, we use the load power steps given in Section 1.1 of \cite{Riverso2012f} and the control inputs computed using MPC controllers as in Section 2 of \cite{Riverso2012f}.}).
               \begin{figure}[!htb]
                 \centering
                 \includegraphics[scale=0.55]{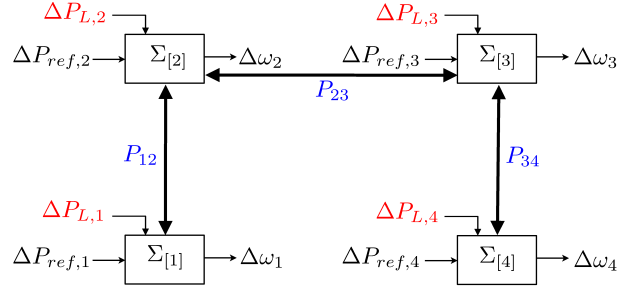}
                 \caption{Power network system composed by four areas}
                 \label{fig:scenario1}
               \end{figure}
               In Example $1$ and $2$, for each area, we consider the following bounds on the state estimation error
               \begin{equation}
                 \label{eq:errorConstraints}
                 \begin{aligned}
                   \Eset_i = \{ \subss e i\in\Rset^{n_i}:~&\norme{\subss e {i,1}}{\infty}\leq 0.005,~
                   \norme{\subss e {i,k}}{\infty}\leq 0.01,~k\in 2:4\}.
                 \end{aligned}
               \end{equation}
               We highlight that constraints \eqref{eq:errorConstraints} correspond in tolerating state estimation errors less then $10\%$ of the maximum value assumed by the state variables. In Example $3$, we consider constraints on the error equal to $ 2\Eset_i,~\forall i\in\MM$.
               
               \subsection{Example 1}
                    \label{sec:example1}
                    As first example, we consider $\delta_{ij}=1,~\forall i\in\MM,~\forall j\in\NN_i$, $\Wset_i=\{0\},~\forall i\in\MM$ (i.e. no disturbances act on the system) and assume to measure only the angular speed deviation $\subss{\Delta\omega}i$ of each area. Therefore, outputs of subsystem $i$ are given by
                    \begin{equation}
                      \label{eq:exe1}
                      \subss y i=C_i\subss x i,\qquad C_i=\matr{0 & 1 & 0 & 0}.
                    \end{equation}
                    In this case, Algorithm \ref{alg:practdec} stops in Step (\ref{alg:step1cen}) because the computed sets $\Sset_i$ are such that $T$ is not Schur. We highlight that from the results of Step (\ref{alg:step1dist}), one obtains the same results if parameters $\delta_{ij}$ are all set equal to zero. Indeed, for matrices $C_i$ in \eqref{eq:exe1}, it is impossible to reduce the magnitude of the coupling terms $\bar A_{ij}=A_{ij}+L_{ij}C_j$ by solving the optimization problems \eqref{eq:Lijopt}.
               
               \subsection{Example 2}
                    \label{sec:example2}
                    We consider $\Wset_i=\{0\},~\forall i\in\MM$, i.e. no disturbances act on the system, and we assume to measure both $\subss{\Delta\theta}i$ and $\subss{\Delta\omega}i$ of each area. Therefore the outputs are given by
                    \begin{equation}
                      \label{eq:exe2}
                      \subss y i=C_i\subss x i,\qquad C_i=\matr{1 & 0 & 0 & 0\\0 & 1 & 0 & 0}.
                    \end{equation}
                    First we consider $\delta_{ij}=0,~\forall i\in\MM,~\forall j\in\NN_i$. In this case, as in the first example, since we cannot take advantage of the knowledge of parents' outputs, Algorithm \ref{alg:practdec} stops before its conclusion. Indeed, it is impossible to find sets $\Sset_i$ such that $T$ is Schur. This example shows that if we also consider more output variables for each subsystem, Algorithm \ref{alg:practdec} can stop in Step (\ref{alg:step1cen}) due the magnitude of the coupling terms $A_{ij}$. Now we consider $\delta_{ij}=1,~\forall i\in\MM,~\forall j\in\NN_i$. In this case we can reduce the magnitude of the coupling terms. Solving optimization problems \eqref{eq:Lijopt}, we can compute matrices $L_{ij}$ such that $\bar A_{ij}=\Zero_{n_i\times n_j}$, hence the Schurness of matrix $T$ is guaranteed since sets $\Sset_i$ are $\lambda_i$-contractive. In this case, $T=\diag(0.932,~0.843,~0.711,~0.889)$ and $\bar\Theta=\{\theta\in\Rset^4:~0\leq\theta_i\leq 1,~\forall i=1:4\}$. We note that if matrix $T$ is diagonal, Step (\ref{alg:step4cen}) of Algorithm \ref{alg:practdec} can be skipped since $\Theta_\infty=\bar\Theta$. \\
                    We performed an estimation experiment initializing the local state estimators $\subss{\tilde\Sigma} i,~i\in\MM$ with $\subss \tx i (0)=\subss x i(0)-\subss e i(0)$, where $\subss e i(0)$ is a vertex of the set $\Sset_i$. In Figure \ref{fig:eNoW} we show the maximum state estimation error defined as
                    \begin{equation}
                      \label{eq:etilde}
                      \subss{\tilde e}j(t)=\max_{i\in\MM}\abs{\subss x {i,j}(t)-\subss \tx {i,j}(t)}
                    \end{equation}
                    where $\subss x {i,j}$ and $\subss \tx {i,j}$ are, respectively, the real and estimated state trajectory of the $j$-th state of the $i$-th subsystem.
                    \begin{figure}[!htb]
                      \centering
                      \includegraphics[scale=0.32]{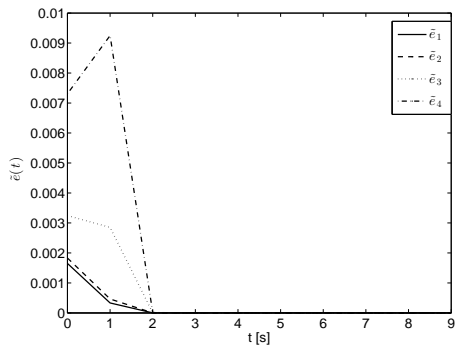}
                      \caption{Maximum estimation errors $\subss {\tilde e} j$ defined as in \eqref{eq:etilde}, for Example $2$.}
                      \label{fig:eNoW}
                    \end{figure}
                    From Figure \ref{fig:eNoW} we note that, since no disturbances act on the system, the state estimation error $\subss e i(t)$ converges to zero as $t\rightarrow\infty$, i.e. \eqref{eq:conv} is verified.

               \subsection{Example 3}
                    \label{sec:example3}
                    We consider $\Wset_i=\{ \subss w i\in\Rset^{n_i}:~\norme{\subss w {i,k}}{\infty}\leq 10^{-5},~k=1:4\},~\forall i\in\MM$ and output variables given in \eqref{eq:exe2}. As in Example $2$, by considering $\delta_{ij}=1,~\forall i\in\MM,~\forall j\in\NN_i$, Algorithm \ref{alg:practdec} does not stop at any intermediate step. We have performed a similar experiment as in Example $2$, but generating statistically independent random samples $\subss w i (t)$ from the uniform distribution on $\Wset_i$. In Figure \ref{fig:eYesW} and \ref{fig:eYesWzoom}, we show the maximum estimation error at the beginning of the experiment (Figure \ref{fig:eYesW}) and for $t\geq 10$ (Figure \ref{fig:eYesWzoom}). In particular, even in presence of disturbances on the system the state estimation error $\subss e i(t)$ lies in the set $\Eset_i$, $\forall i\in\MM$.
                    \begin{figure}[!htb]
                      \centering
                      \includegraphics[scale=0.32]{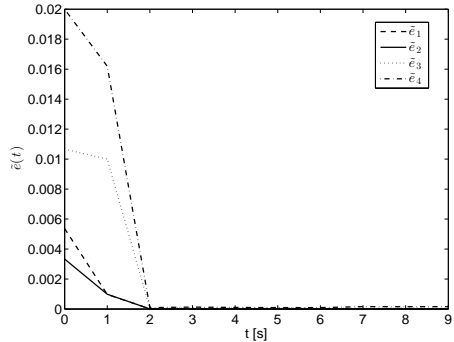}
                      \caption{Maximum estimation error $\subss{\tilde e} j(t)$, $t=0:9$ defined as in \eqref{eq:etilde}, for Example $3$.}
                      \label{fig:eYesW}
                    \end{figure}
                    \begin{figure}[!htb]
                      \centering
                      \includegraphics[scale=0.32]{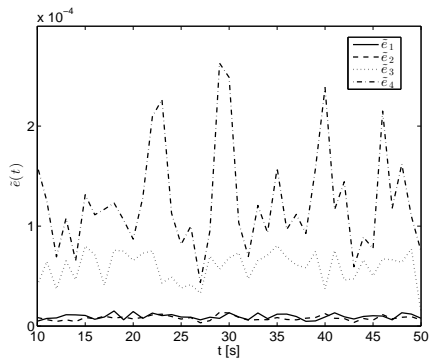}
                      \caption{Maximum estimation error $\subss{\tilde e} j(t)$, $t=10:50$ defined as in \eqref{eq:etilde}, for  Example $3$.}
                      \label{fig:eYesWzoom}
                    \end{figure}

               \subsection{Comparison with \cite{Farina2011b}}
                    In the previous examples we considered polytopic sets $\Eset_i,~i\in\MM$ defined in \eqref{eq:errorConstraints} that are also zonotopes, i.e. centrally symmetric polytopes. This is required by the DSE in \cite{Farina2011b}. Local state estimators in \cite{Farina2011b} depend on the state of parent systems, but not on their outputs. This corresponds to setting $\delta_{ij}=0$, $i,j\in\MM$ in our scheme. Using the DSE in \cite{Farina2011b}, we cannot compute the observers for the power network system in Examples $1$ and $2$. In fact, since all parameters $\delta_{ij}$ are zero it is impossible to reduce the magnitude of the coupling by using parents' outputs, as we do in Example $2$.

                    Moreover in \cite{Farina2011b}, the authors look for a family of sets $\Sset$ composed by mRPI sets $\Sset_i$ verifying
                    \begin{equation}
                      \label{eq:invariancefamFarinaScatto}
                      \bar A_{ii}\Sset_i\oplus\bigoplus_{j\in\NN_i}A_{ij}\Sset_j\subseteq\Sset_i\subseteq\Eset_i.
                    \end{equation}
                    We note that \eqref{eq:invariancefamFarinaScatto} is a special case of the parameterized RPI family in \eqref{eq:invariancefam} when $\delta_{ij}=0$ (i.e. $\bar A_{ij}=A_{ij}$) and $\Wset_i=\{0\}$. Since from \eqref{eq:invariancefamFarinaScatto} one has $\theta^+=\theta$, the matrix $T$ is not Schur, and for guaranteeing convergence of the estimates one has to check Schurness of the overall matrix $\mbf{A+LC}$.
%the convergence properties of the state estimation are given by the Schurness of the overall matrix $\mbf{A-LC}$. In our distributed estimator, the convergence properties are achieved by the convergence of the variables $\theta$, i.e. no assumptions on the overall matrices are needed, hence the proposed distributed state estimator generalizes the one proposed in \cite{Farina2011b}. 
In our distributed estimator, convergence of the estimates can be checked by testing the Schurness of $T$, i.e. no assumptions on the overall matrices are needed. 
                    % We also highlight that in \cite{Farina2011b} the Schurness of the overall matrix $\mbf{A+LC}$ implies the absence of unstable decentralized fixed modes \cite{Wang1973}. In our approach, setting $\delta_{ij}=1$, $\forall i,j\in\MM$, we implicitly require that the overalling gain $\mbf L$ is such that $\mbf{A+LC}$ is Schur, i.e. the pair $(\mbf{A,C})$ has no decentralized overlapping fixed modes \cite{Lavaei2008}.

          \section{Conclusion}
               \label{sec:conclusion}
               In this paper, we proposed a novel partition-based DSE for linear discrete-time subsystems affected by bounded disturbances. Our method guarantees convergence of the state estimates and the fulfillment of given bounds on state estimation errors. Similarly to \cite{Farina2011b} our DSE can be directly used together with state-feedback distributed control schemes such as \cite{Riverso2012e} although further research is needed for assessing the stability properties of the closed-loop system. In the future, we will also consider the problem of decentralizing completely computations required in the design process. This would lead to state-estimators that can be designed using local computational resources only, so coping with the plug-and-play design requirements of the model predictive control scheme proposed in \cite{Riverso2013c}.

    \appendix

          \section{Proof of Proposition \ref{prop:unplug}}
               \label{sec:appendixA}
                 The proof of Proposition \ref{prop:unplug} hinges on Perron-Frobenious theory for nonnegative matrices. Next, we provide relevant definition,  deferring the reader to \cite{Meyer2000} for further details.
                 \begin{defi}
                   The graph, $\Gamma(Q)=(V,\EE)$ of $Q\in\Rset^{M\times M}$ is the directed graph with nodes $V=1:M$ and edges $\EE=\{(i,j): q_{ij}\neq 0\}$ where $q_{ij}$ is the $ij$-th element of the matrix $Q$.
                 \end{defi}
                 \begin{defi}
\label{def:sc_digraph}
                   A directed graph $\Gamma$ is strongly connected if for any pair of nodes $(N_i,N_j)$ there exists a sequence of edges which leads from $N_i$ to $N_j$.
                 \end{defi}
                 \begin{defi}
                   A matrix $Q\in\Rset^{M\times M}$ is irreducible if there is no permutation matrix $P$ such that
                   $$
                   Z = PQP^T= \matr{ Q_{11} & Q_{12} \\ 0 & Q_{22} },
                   $$
                   where $Q_{11}\in\Rset^{q\times q}$, $Q_{22}\in\Rset^{M-q,M-q}$ and $Q_{12}\in\Rset^{q,M-r}$, $0<q<M$. %The matrix $Q\in\Rset^{M\times M}$ is said to be irreducible if it is not reducible.         
                 \end{defi}
                 % \begin{defi}
                 %   A matrix $Q\in\Rset^{M\times M}$ is said to be irreducible if it is not reducible.
                 % \end{defi}
\begin{proof}[Proof of Proposition~ \ref{prop:unplug}]
The matrix $T$ in \eqref{eq:thetadyn} is nonnegative, i.e. $\mu_{ij}\geq 0,~\forall~i,j\in\MM$. Moreover, $\GG=\Gamma(T)$ and since $\GG$ is strongly connected, $T$ is irreducible \cite[p. 671]{Meyer2000}. Let 
\begin{equation}
                   \label{eq:taumatrix}
                   \T = \matr{\mu_{11} & \cdots & \mu_{1,q-1} & 0 & \mu_{1,q+1} & \cdots & \mu_{1,M} \\ 
                                      \vdots & \vdots & \vdots & \vdots & \vdots & \vdots & \vdots \\ 
                                      \mu_{q-1,1} & \cdots & \mu_{q-1,q-1} & 0 & \mu_{q-1,q+1} & \cdots & \mu_{q-1,M} \\
                                      0 & \cdots & 0 & 0 & 0 & \cdots & 0 \\
                                      \mu_{q+1,1} & \cdots & \mu_{q+1,q-1} & 0 & \mu_{q+1,q+1} & \cdots & \mu_{q+1,M} \\ 
                                      \vdots & \vdots & \vdots & \vdots & \vdots & \vdots & \vdots \\ 
                                      \mu_{M,1} & \cdots & \mu_{M,q-1} & 0 & \mu_{M,q+1} & \cdots & \mu_{M,M} \\ 
                                    }\in\Rset^{M\times M}
                 \end{equation}

 From Weilandt's Theorem \cite[p. 675]{Meyer2000}, one has $\rho(\T)\leq\rho(T)$. Moreover, up to a permutation matrix, one has $\T=\matr{ 0 & 0 \\ 0 & \hat T }$ and hence $\rho(\hat T)\leq \rho(\T)$. Therefore $\rho(\hat T)\leq\rho(T)$ and the proof is concluded recalling that, by assumption, $\rho(T)<1$.
\end{proof}

          \section{Proof of Proposition \ref{prop:uniqueeq}}
               \label{sec:appendixB}
                First we define matrix $\T$ as in \eqref{eq:taumatrix}.
                 Since $\mu_{ij}\geq 0,~\forall i,j\in\MM$, the elements of matrices $\T^k$ and $T^k$ are nonnegative $\forall k\geq 0$. Moreover we can show that the $ij$-th element of $\T^k$ (with abuse of notation, $\T_{ij}^k$) is smaller than the $ij$-th element of $T^k$ (with abuse of notation, $T_{ij}^k$), i.e. $\T_{ij}^k\leq T_{ij}^k$, $\forall i,j\in\MM$ and $\forall k\geq 0$. 
                 Let $\bar\tau=(\hat{\bar\theta}_1,\ldots,\hat{\bar\theta}_{q-1},0,\hat{\bar\theta}_q,\ldots,\hat{\bar\theta}_{M-1})\in\Rset^M$, where $\hat{\bar\theta}_i$ is the $i$-th component of vector $\hat{\bar\theta}$ and
                 \begin{equation}
                   \label{eq:tildealpha}
                   \tilde\alpha=\alpha\in\Rset_+^M,~ \tilde\alpha_q=0.
                 \end{equation}
                 The unique equilibrium point of system \eqref{eq:thetadyn} can be written as $\bar\theta=\sum_{k=0}^\infty T^k\alpha$. Moreover from \eqref{eq:unpluggingthetadyn} and the definitions of $\bar\tau$ and $\T$, we have that $\bar\tau=\sum_{k=0}^\infty \T^k\tilde\alpha$. Since $\T_{ij}^k\leq T_{ij}^k$, then $\sum_{k=0}^\infty\T_{ij}^k\leq\sum_{k=0}^\infty T_{ij}^k$ and hence $\bar\tau\leq\bar\theta$ element-wise. From \eqref{eq:Theta0comp} and Assumption~\ref{ass:thetaass}-\ref{enu:bartheta}, one has
 $\bar\tau\in\Theta_0$. Therefore, we can conclude that $\hat{\bar\theta}\in\hat\Theta_0$.
          
          \section{Proof of Proposition \ref{prop:projinvariant}}
               \label{sec:appendixC}                 
               \begin{proof}[Proof of Proposition~ \ref{prop:projinvariant}]
                 After subsystem $q$ has been removed, the dynamics of contraction factors $\hat \theta$ is given by \eqref{eq:unpluggingthetadyn}.
                 % as $\hat\theta^+=\hat T\hat\theta+\hat\alpha$, with $\hat\theta\in\hat\Theta_0$. 
                 In the following we show that $\hat\Theta$ defined in \eqref{eq:hatTheta} is an RPI set for  \eqref{eq:unpluggingthetadyn}.
                 
                 From the invariance of set $\Theta_\infty$ we have that $T\theta+\alpha\in\Theta_\infty,~\forall \theta\in \Theta_\infty$. Moreover, since $0\in\Theta_\infty$, we have  $\T\theta+\tilde\alpha\in\Theta_\infty,~\forall\theta\in\Theta_\infty$ (where $\T$ and $\tilde\alpha$ are defined in \eqref{eq:taumatrix} and \eqref{eq:tildealpha}) i.e. the set $\Theta_\infty$ is also invariant for the LTI system  $\theta^+=\T\theta+\tilde\alpha$ and the $q$-th component of $\theta$ is always zero. Therefore, we can conclude that the projection of set $\Theta_\infty$ defined in \eqref{eq:hatTheta} is an RPI set for system $\hat\theta^+=\hat T\hat\theta+\hat\alpha$.
               \end{proof}

                 \bibliographystyle{IEEEtran}
                 \bibliography{distributed_state_estimator-report}

\end{document}